\documentclass[journal]{IEEEtran}
\usepackage[cmex10]{amsmath}
\usepackage{cite}
\usepackage{amsthm}
\usepackage{graphicx}
\usepackage{bm,mathrsfs}
\usepackage{dsfont}
\usepackage{array}
\usepackage{mdwmath}
\usepackage{mdwtab}
\usepackage{fixltx2e}
\usepackage{url}


\theoremstyle{definition}

\newtheorem{thm}{Theorem}
\newtheorem{lem}[thm]{Lemma}

\newcommand{\ket}[1]{\lvert #1 \rangle}

\begin{document}

\title{Reducing the quantum communication cost of quantum secret sharing}
\author{Ben Fortescue, Gilad Gour
\thanks{B. Fortescue is with the Department of Physics, Southern Illinois University, 1245 Lincoln Dr, Carbondale 62901, USA.  Most work in this paper done while with the Institute for Quantum Information Science, University of Calgary, 2500 University Dr NW, Calgary, Alberta, Canada.}
\thanks{G. Gour is with the Institute for Quantum Information Science and the Department of
Mathematics and Statistics, University of Calgary, 2500 University Dr NW, Calgary, Alberta, Canada.}
\thanks{B. Fortescue was funded by AITF, NSERC Project ``Frequency'' and a PIMS postdoctoral fellowship.  G. Gour is funded by NSERC.}}
\date{\today}
\maketitle

\begin{abstract}
We demonstrate a new construction for perfect quantum secret sharing (QSS) schemes based on imperfect ``ramp'' secret sharing
combined with classical encryption, in which the individual parties' shares are split into quantum and classical
components, allowing the former to be of lower dimension than the secret itself.   We show that such
schemes can be performed with smaller quantum components and lower overall quantum communication than required for existing methods.
We further demonstrate that one may combine both imperfect quantum and imperfect classical secret sharing to produce an overall
perfect QSS scheme, and that examples of such schemes (which we construct) can have the smallest
quantum and classical share components possible for their access structures, something provably not achievable using perfect underlying schemes.
Our construction has significant potential for being adapted to other QSS schemes based on stabiliser codes.
\end{abstract}

\section{Introduction}
Quantum secret sharing (QSS) is a cryptographic protocol in which a dealer
encodes a quantum state (the secret) into multiple shares and distributes the shares to various
players.  Certain subsets of the players (denoted ``authorised'' sets, and collectively
known as the access structure) can collaboratively reconstruct the
secret from their shares while other subsets (denoted ``forbidden'' sets, and collectively known
as the adversary structure) can obtain no information about the secret from
their shares.  An example would be an $n$-player protocol where the access structure
consists of all sets of $k$ or more players and the adversary structure of all sets of $k-1$ or fewer players.
Such protocols are known as $(k,n)$ threshold schemes.
Analogous classical secret sharing (CSS) protocols (involving classical secrets and shares) were introduced by
Shamir \cite{Shamir79} and independently by Blakley \cite{Blakley79}.  QSS protocols were originally described by Hillery, Bu{\^z}ek and Berthiaume \cite{HBB99}
and Cleve, Gottesman and Lo \cite{CGL}, with the latter giving a means to construct a threshold scheme
for any allowable \footnote{QSS protocols must satisfy the two criteria of monotonicity (an authorised
subset remains authorised if shares are added) and the no-cloning theorem (two disjoint subsets cannot both be authorised,
since then the players could separately reconstruct two copies of the secret, violating no-cloning)} access structure.

Threshold schemes are examples of ``perfect'' access structures, i.e.\ those in which every subset of players
is either authorised or forbidden, with no subsets able to reconstruct only partial information
about the secret.  As proven by Gottesman \cite{Gott00}, perfect QSS requires that all player shares (aside
from trivial shares whose presence never affects whether or not a subset is authorised) be
at least as large as the secret.   That is, for a secret of dimension $d_s$, and players $i$ each
receiving a share of dimension $d_i$, perfect QSS requires $\min_i d_i\ge d_s$, with ``optimal''
schemes achieving $d_i=d_s$ for all $i$.

This bound imposes a large potential cost on the communication and storage of shares in perfect QSS.  For example,
sharing a 1-qubit secret by distributing quantum shares to 100 players in a threshold scheme will require at least
100 qubits to be communicated by the dealer and (due to the no-cloning bound) at least 51 qubits
to be used by the players for reconstruction.  This may well not be practical given noisy quantum communication
channels, unreliable joint quantum operations, and short storage times.  Furthermore, no optimal general construction
for schemes with non-threshold perfect access structures is known
(they may be non-optimally constructed by concatenating threshold schemes \cite{Gott00}), so the costs for such schemes may be larger still.

In this paper we demonstrate a new class of protocols which reduce the cost of quantum communication
and storage in implementing perfect QSS schemes, by combining two existing ideas for doing so: ramp (i.e. non-perfect) secret
sharing \cite{OSIY05} and hybrid secret sharing \cite{NMI01,SS05}.  Our protocols, involving both classical and quantum shares,
require a smaller amount of total quantum
communication than either approach taken in isolation. Moreover, we describe protocols
that are provably optimal in minimising both the size of the quantum shares and,
given this minimisation, the size of the classical shares required.

\section{Ramp secret sharing}
The derivation of the share size bound in \cite{Gott00} (which we discuss in more detail
in Section \ref{sec-bound}) depends on the observation that, for a {\it perfect} QSS, there must exist some
forbidden subset which can be made authorised by adding a single additional share.
Thus complete information about the secret of size $d_s$ may be transferred
via a share of size $d_i$, from which the bound follows.

Consider now an access structure in which every forbidden subset instead requires a minimum of $L$ shares
to be added to become authorised: the corresponding bound would be (in the simplified
case where all shares are of the same dimension $d_i$) that $L\log_2 d_i\ge \log_2 d_s$.
There would also now exist certain {\it intermediate} subsets (consisting of some
forbidden subset with $l<L$ shares added) who could reconstruct some {\it partial}
information about the secret.  Thus, one could potentially have smaller shares
(encoding e.g. an $s$-qudit secret into shares of size $s/L$ qudits), but at the cost of
some security; there will be some information leakage to subsets who are not authorised.

Such schemes are known as ``ramp'' secret sharing and were originally proposed
for CSS \cite{BM84, Yam86} for which an analogous share size bound exists (and can lead to very large
data storage requirements for perfect schemes), and one can formalise
whether or not such leakage is tolerable by considering a computationally-bounded
adversary.  Ramp QSS (RQSS) schemes have received relatively little
attention in the literature, but a construction has been given by Ogawa et al. \cite{OSIY05}
for all allowable $(k,L,n)$ threshold access structures i.e.\ where subsets of $k$ or more players are authorised,
those of $k-L$ or fewer are forbidden, and those of $k-l$ ($l<L$) are intermediate (so a perfect threshold scheme in this notation
has $L=1$).  Furthermore,
the construction of \cite{OSIY05} is optimal in that it encodes
a secret of $L$ qudits into shares of a single qudit (though we note that, like the protocol
of \cite{CGL} upon which it is based, this protocol additionally requires
the qudit shares to be of prime dimension $d\ge n$, and hence not all ramp threshold schemes are covered
by this construction).

\section{Hybrid secret sharing}
In addition to being applicable to perfect schemes only, the share size bound for QSS
does not require that every share be a quantum system i.e.\ some shares can potentially consist
partly or solely of {\it classical} information.  While strictly speaking such shares can still be regarded
as quantum systems and must still obey the size bound, they would clearly have great practical advantages,
since classical information is far easier to communicate, store and process, and they would allow QSS
schemes to involve players with no ability to handle quantum states.  Such schemes can be implemented
through combining QSS with classical encryption.

In the well-known quantum teleportation protocol \cite{teleport}, Alice
transmits an unknown quantum state $\psi$ to Bob (with whom she
shares a maximally-entangled state) via a local measurement on her joint quantum
system followed by the transmission of some classical information $C$ to Bob (who then performs a state
reconstruction conditioned on $C$).  Bob's local quantum system can therefore, immediately prior
to receiving the information,
be represented as being in some mixed state $\rho_B$ which must be independent of $\psi$ due to the no-signalling
theorem, but from which, in combination with $C$, $\psi$ can be reliably obtained.  In other
words, one can consider $\rho_B$ as being the state $\psi$ securely encrypted using a classical key $C$.
Specifically, it follows from the teleportation protocol that
one can securely encrypt a quantum state of dimension $d$ using a classical key of 2$\log_2 d$
bits (which has also been shown \cite{AMTW00} to be the minimum key size required), via the basis-state encryption:
\begin{equation}\label{eq-encrypt}
\ket{j}\to \omega^{jl}\ket{j+k\textrm{ mod }d}
\end{equation}
where $\omega=e^{\frac{2i\pi}{d}}$ and the two values $\{k,l\}\in \{0\ldots d\}$ (chosen at random and thus independent
of the original state) constitute the classical key (CK)
for the encrypted quantum secret (EQS).

Such encryption can be exploited in QSS, as demonstrated in \cite{HBB99}, which effectively
described (in terms of teleportation) a scheme in which the dealer sends one player the EQS
and the other the CK.  Neither player alone has any information about the secret but together
they may reconstruct it, thus constituting a perfect $(2,2)$ threshold scheme in which one
share is the size of the secret but the other is wholly classical.  Though not discussed in terms of communication
cost in \cite{HBB99}, this halves the amount of quantum communication required vs. a standard optimal QSS.

This idea was generalised by Nascimento, Mueller-Quade and Imai \cite{NMI01}, who considered
the case of such ``hybrid'' QSS (HQSS) schemes where the EQS and CK are separately encoded using QSS and CSS schemes
respectively, and then the classical and quantum shares distributed to the players.  They gave a general
construction for hybrid schemes; we rephrase its specific application to the case of threshold schemes
in the following lemma.

\begin{lem}\label{lem-ht}\cite{NMI01}
A hybrid $(k,n)$ threshold QSS can be constructed using only $(2n-2k+1)$ fully- or partly-quantum shares
with the remaining shares fully classical.
\end{lem}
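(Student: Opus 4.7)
My plan is to give an explicit construction by splitting the secret into an encrypted quantum secret (EQS) and classical key (CK) via (\ref{eq-encrypt}), sharing these two pieces with two separate underlying threshold schemes, and exploiting the fact that in a hybrid protocol it suffices for any forbidden subset to be forbidden in at least one of the two subschemes: losing the CK leaves the EQS in a state independent of the secret, and losing the EQS leaves the CK useless on its own.

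Concretely, I would encrypt a $d$-dimensional secret using (\ref{eq-encrypt}) to produce a $d$-dimensional EQS and a classical key of $2\log_2 d$ bits. The CK I would distribute using a classical $(k,n)$ Shamir scheme, giving one classical share to every player. The EQS I would encode using a quantum $((n-k+1), (2n-2k+1))$ threshold scheme built by the method of \cite{CGL}, producing $2n-2k+1$ quantum shares; these I would hand out to $2n-2k+1$ of the players, who thus become partly-quantum, while the remaining $2k-n-1$ players receive only their classical Shamir share and so are fully classical. The total number of fully- or partly-quantum shares is then $2n-2k+1$, as required.

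Correctness then reduces to two counting checks. For any authorised $k$-subset, the $k$ classical shares held reconstruct the CK by the guarantees of Shamir; and since no more than $2k-n-1$ members of the subset can be classical-only, at least $k-(2k-n-1)=n-k+1$ of them hold quantum shares, which is exactly the reconstruction threshold of the EQS subscheme, so the EQS is recovered and decryption via the CK returns the secret. For any $(k-1)$-subset, the $k-1$ classical shares give no information about the CK by Shamir security, and hence the joint state of their quantum holdings is a fixed state independent of the secret, by the same no-signalling argument used to justify the encryption (\ref{eq-encrypt}).

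The main obstacle I anticipate is verifying that the chosen quantum subscheme is actually realisable as a pure threshold, i.e.\ that the parameters $(n-k+1, 2n-2k+1)$ meet the no-cloning constraint with room for the \cite{CGL} construction. This reduces to the inequality $2(n-k+1) > 2n-2k+1$, which is $2>1$ and so is always satisfied; the companion bound $n-k+1 \le 2n-2k+1$ is equivalent to $k \le n$, and $n \le 2k-1$ is the standing assumption for a $(k,n)$ QSS to exist at all, so the construction covers exactly the admissible range of parameters. A minor side condition is that the prime-dimension requirement of \cite{CGL} can be met for $d$, which can be arranged by choosing the encryption alphabet appropriately.
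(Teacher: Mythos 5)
Your construction is exactly the paper's: encrypt via (\ref{eq-encrypt}), share the CK with a perfect $(k,n)$ CSS over all players and the EQS with a $(k_q,n_q)=(n-k+1,\,2n-2k+1)$ QSS over a subset, then count quantum shares in any $k$-set and invoke perfect CSS security for $(k-1)$-sets; the only cosmetic difference is that you instantiate the optimal parameters directly and verify them, whereas the paper derives $n_q\ge 2n-2k+1$ from the constraints $n_q-k_q\ge n-k$ and $n_q\le 2k_q-1$ and then saturates. The proposal is correct.
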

\begin{IEEEproof}
The construction consists of the dealer classically encrypting the secret, dividing the EQS
among a subset of $n_q\le n$ players using a $(k_q, n_q)$ QSS, and the CK among all $n$ players
using a $(k,n)$ CSS (so some players have both classical and quantum shares).
Thus $k-1$ or fewer players will have no information about the CK
and hence none about the secret (irrespective of any quantum shares they may also have).
$k$ or more players will be able to reconstruct the CK but we further require that they be able
to reconstruct the EQS, and hence have $k_q$ or more quantum shares.  $k$ players will
have a minimum of $k-(n-n_q)$ quantum shares, hence we require
\begin{equation}
n_q-k_q\ge n-k.\label{eq-kngap}
\end{equation}
To satisfy no-cloning, we additionally require
\begin{equation}
n_q\le 2k_q-1 \label{eq-noclone}.
\end{equation}
From (\ref{eq-kngap}) and (\ref{eq-noclone}) we obtain
\begin{equation}
n_q\ge 2n-2k+1
\end{equation}
and minimising $n_q$ by setting $n_q=2n-2k+1$ we achieve the result.
Note that due to no-cloning $n\ge (2n-2k+1)$ and hence in general this construction reduces the number of quantum shares required vs.
a non-hybrid scheme.
\end{IEEEproof}

Some observations about this protocol: we note that for $n_q=2n-2k+1$, it follows from (\ref{eq-kngap}) and (\ref{eq-noclone}) that
\begin{align}
n_q&=2k_q-1\textrm{ and}\\
n_q-k_q&=n-k.\label{eq-knequal}
\end{align}
for which there is only one solution for given $(k,n)$.
That is, in this construction, the optimal underlying QSS scheme is that with the unique access structure that both saturates the no-cloning bound
(we will refer to such schemes as ``boundary'' schemes) and satisfies (\ref{eq-knequal}).  For an initial $(k,n)$ boundary
scheme, then, the optimal underlying QSS scheme will simply have a $(k_q=k,n_q=n)$ structure.  Hence the construction of Lemma \ref{lem-ht}
can only reduce the
number of quantum shares for non-boundary threshold schemes.  An example of such an HQSS scheme is shown in Figure \ref{fig-hybrid}.

\begin{figure}
\includegraphics[height=6cm]{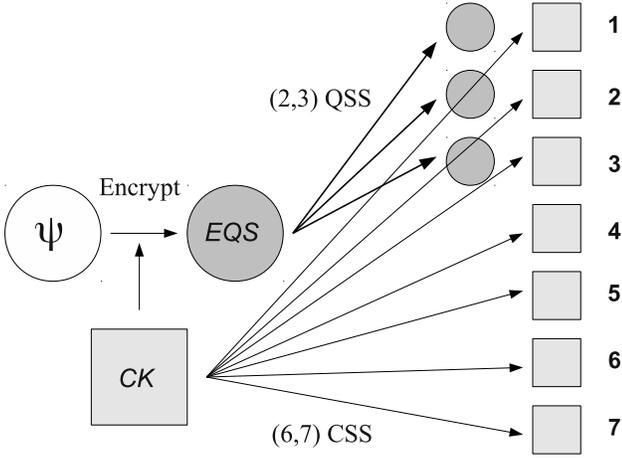}
\caption{An example (6,7) HQSS protocol.  A quantum secret $\psi$ is encrypted to an EQS using a CK, then the EQS is split among the first 3 of 7 players
using a (2,3) QSS and the CK among all 7 players using a (6,7) CSS.  Any fewer than 6 players have no information about the CK, hence none about the secret.
A subset of 6 or more players, however, can recover the CK and will also contain at least 2 of the 3 players with quantum shares, hence can also recover the EQS
and decrypt to obtain the secret.  Thus the overall scheme is a perfect (6,7) QSS scheme, but requiring only 3 quantum shares.}\label{fig-hybrid}
\end{figure}

Assuming then that an {\it optimal} threshold QSS construction exists for the underlying $(k_q,n_q)$ scheme with a secret of size $d_s$
(as is always the case \cite{CGL} for prime $d_s\ge n_q$), we can implement a $(k,n)$ scheme with total dealer quantum communication, in qubits, of
\begin{equation}
Q_{HQSS}=(2n-2k+1)\log_2 (d_s).\label{eq-hqsscost}
\end{equation}
All of the perfect hybrid schemes described in \cite{NMI01} and later work by 
Singh and Srikanth \cite{SS05} are based on distributing the EQS using perfect QSS, which reduces the number
of quantum shares but not their individual sizes.  In the next section we demonstrate
that one may reduce both.

\section{Hybrid ramp QSS schemes}
The above scheme combines perfect QSS and CSS schemes to create a perfect hybrid QSS.
We note that, if one instead combines a  {\it ramp} QSS scheme with a perfect CSS scheme,
one can potentially also create a perfect hybrid ramp QSS (HRQSS) scheme, despite the presence
of intermediate sets with respect to the underlying RQSS scheme.  If the shares are distributed
in such a way that any subset of players with partial information on the EQS has no information on the CK,
then overall the subset has no information about the original secret and hence is forbidden.
In this case there are no longer any intermediate sets and the scheme is perfect.

To this end, the same general construction used in Lemma \ref{lem-ht}
can be used, but with a $(k_{qr},L_{qr},n_{qr})$ underlying RQSS instead of
a $(k_q, n_q)$ QSS.
By the same reasoning as (\ref{eq-kngap}) we have the requirement
\begin{equation}
n_{qr}-k_{qr}\ge n-k\label{eq-rampgap}
\end{equation}
Since clearly $n_{qr} \le n$, it follows from (\ref{eq-rampgap}) that
$k_{qr}\le k$, and hence any intermediate subset of the RQSS will be forbidden with respect to the CK
and hence the secret, as required.  There is, however, one important difference from the previous case:
Since gaining information about an unknown quantum state implies disturbance of that state \cite{FP96},
for any $k_{qr}$ players to have complete knowledge of the secret requires that the complementary
sets of $n_{qr}-k_{qr}$ players have no knowledge of the secret.  In a $(k_{qr},L_{qr},n_{qr})$ RQSS scheme
a subset of more than $k_{qr}-L_{qr}$ players has some knowledge of the secret, hence we require for such a scheme that
\begin{align}
n_{qr}-k_{qr}&\le k_{qr}-L_{qr},\textrm{ thus}\nonumber\\
n_{qr}&\le 2k_{qr}-L_{qr}.\label{eq-rampbound}
\end{align}
Hence we cannot have $n_{qr}=2k_{qr}-1$ (except in the trivial $L_{qr}=1$ case) and
simply replace the underlying boundary
QSS with an RQSS.  Instead we must use an RQSS which, we will see,
in general has more quantum shares than the QSS in our previous construction (though still
fewer than a standard $(k,n)$ QSS), but the quantum shares are smaller.
We find the following result:

\begin{thm}\label{thm-hrqss}
One may construct a perfect $(k,n)$ threshold scheme with secret size $\log_2 d_s$ qubits and $n_{qr}$ partly- or fully-quantum
shares (where $2n-2k+1\le n_{qr}\le n$)  with shares of size $\frac{\log_2 d_s}{1-2\frac{n-k}{n_{qr}}}$ qubits,
for a total dealer quantum communication cost, in qubits, of
\begin{equation}
Q_{HQRSS}=\frac{\log_2 d_s}{1-2\frac{n-k}{n_{qr}}},\label{eq-hqrsscost}
\end{equation}
provided that an optimal $(k_{qr},L_{qr},n_{qr})$ RQSS exists for a secret of size $d_s$,
where $k_{qr}=n_{qr}-(n-k)$ and $L_{qr}=2k_{qr}-n_{qr}$.
\end{thm}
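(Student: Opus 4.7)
The plan is to adapt the construction of Lemma~\ref{lem-ht} directly, substituting the optimal $(k_{qr},L_{qr},n_{qr})$ RQSS of Ogawa et al.\ \cite{OSIY05} for the perfect underlying QSS. Specifically, the dealer classically encrypts the secret via (\ref{eq-encrypt}), producing an EQS of $\log_2 d_s$ qubits and a CK of $2\log_2 d_s$ bits; the EQS is distributed among $n_{qr}$ of the $n$ players using the RQSS, and the CK is distributed among all $n$ players using a perfect $(k,n)$ CSS (e.g.\ Shamir's scheme), so that some players hold both classical and quantum components.

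First I would fix the RQSS parameters. Setting $k_{qr}=n_{qr}-(n-k)$ saturates (\ref{eq-rampgap}), and setting $L_{qr}=2k_{qr}-n_{qr}$ saturates (\ref{eq-rampbound}); the latter choice maximises $L_{qr}$ and hence minimises the quantum share size for a given $n_{qr}$. Substituting yields $L_{qr}=n_{qr}-2(n-k)$, so the requirement $L_{qr}\ge 1$ gives the stated lower bound $n_{qr}\ge 2n-2k+1$, while $n_{qr}\le n$ is automatic since one cannot distribute quantum shares to more players than exist.

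Next I would verify the claimed $(k,n)$ access structure. Any authorised set of $\ge k$ players holds $\ge k$ CSS shares, so reconstructs the CK, and it also holds at least $k-(n-n_{qr})=k_{qr}$ quantum shares, so reconstructs the EQS via the RQSS and can then decrypt. For any set of $\le k-1$ players, the CSS is perfect and hence they have no information about the CK. The main subtlety sits here: by hypothesis $k_{qr}\le k$, so such a set may correspond to an \emph{intermediate} subset of the RQSS and thereby possess partial information on the EQS. I would close this gap by invoking the one-time-pad property of (\ref{eq-encrypt}): averaging the EQS over a uniform, unknown CK produces the maximally mixed state on the EQS register, independent of the secret, and hence any reduced state available to the adversarial subset factorises from the secret. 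Partial knowledge of a state independent of the secret carries no information about the secret, so every such subset is forbidden and the overall scheme is perfect.

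Finally, the quantum communication cost is bookkeeping. Each of the $n_{qr}$ quantum shares carries $\log_2 d_s/L_{qr}$ qubits, so the total dealer quantum communication is $n_{qr}\log_2 d_s/(n_{qr}-2(n-k))=\log_2 d_s/(1-2(n-k)/n_{qr})$, which is (\ref{eq-hqrsscost}). The hardest conceptual step is the one-time-pad argument ruling out leakage from intermediate RQSS subsets; after that, the parameter choices and cost computation follow mechanically from the reasoning of Lemma~\ref{lem-ht} with the ramp parameters substituted in.
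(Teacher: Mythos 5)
Your proposal is correct and follows essentially the same route as the paper's own proof: the same saturation of (\ref{eq-rampgap}) and (\ref{eq-rampbound}) to fix $k_{qr}$ and $L_{qr}$, the same encrypt-then-share construction (perfect $(k,n)$ CSS for the CK, optimal RQSS for the EQS among $n_{qr}$ players), and the same cost bookkeeping giving (\ref{eq-hqrsscost}). The only difference is that you spell out inside the proof the security argument for sets of fewer than $k$ players --- that $k_{qr}\le k$ forces any intermediate RQSS subset to be forbidden for the CK, and the one-time-pad property then kills any leakage from partial EQS information --- whereas the paper establishes this in the discussion immediately preceding the theorem rather than in the proof body.
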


\begin{proof}
An optimal $(k_{qr}, L_{qr},n_{qr})$ RQSS protocol has share size $\frac{n_{qr}}{L_{qr}}\log_2 d_s$ qubits,
(we can always, though not exclusively, construct such protocols \cite{OSIY05} for shares of prime dimension $\ge n_{qr}$).
For given $n_{qr}$, we wish to minimise
\begin{equation}
\frac{n_{qr}}{L_{qr}}\ge \frac{1}{2\frac{k_{qr}}{n_{qr}}-1}
\end{equation}
subject to (\ref{eq-rampbound}) and (\ref{eq-rampgap}).  We can do this by setting
\begin{align}
k_{qr}&=n_{qr}-(n-k)\\
L_{qr}&=2k_{qr}-n_{qr}.
\end{align}
We therefore, as in the HQSS case, encrypt the secret using a CK
of $2 \log_2 d_s$ bits and distribute the CK among the players
using a $(k,n)$ threshold CSS scheme, then distribute the corresponding
EQS among $n_{qr}$ of the players using an optimal $(k_{qr},L_{qr}, n_{qr})$ RQSS scheme
satisfying the above parameters.
\end{proof}

As seen from (\ref{eq-hqrsscost}), the dealer's quantum communication cost $Q_{HQRSS}$ decreases as $n_{qr}$ increases
and hence is at its lowest
when $n_{qr}=n$ i.e.\ when all $n$ shares have a small quantum element of size $(\log_2{d_s})/n_{qr}$, in which case
the total quantum communication cost is
\begin{equation}
Qmin_{HQRSS}=\frac{n\log_2 d_s}{2k-n}.\label{eq-minhqrsscost}
\end{equation}
The minimum
number of quantum shares (and hence the largest cost) occurs when
$n_{qr}=2n-2k+1$ and $L_{qr}=1$ i.e.\ the HQSS case of Lemma \ref{lem-ht}, with a cost given
by (\ref{eq-hqsscost}).  In general, then,
this scheme has a lower cost than the HQSS construction.

Hence we have a range of possible protocols; we can use fewer, larger quantum shares
(as would be suitable when e.g. many players can only process classical information)
or more, smaller quantum shares (suitable when e.g. more players can process
quantum information, but with small communication or storage capacities), with the
latter option giving the smallest overall quantum system distributed by the dealer.

\section{Bounds on quantum and classical share size}\label{sec-bound}
In our above constructions of perfect QSS schemes, involving both quantum
and classical elements to the player shares, we have only considered optimisation
with respect to the quantum communication cost.  It is straightforward, however,
to derive a joint bound on both quantum and classical share sizes,
by a simple generalisation of the proof of the quantum share size bound
given in \cite{Gott00} (since that bound was ultimately established by
referring to classical communication cost), which we give below, using much
of the same reasoning as \cite{Gott00}.
\begin{thm}
A perfect QSS scheme sharing an arbitrary quantum secret of dimension $d_s$ requires shares with quantum elements
of dimension $d_q$ and classical elements of dimension $d_c$ such that, for every important share (those which can affect
whether or not a subset is authorised), $2\log_2 d_q + \log_2 d_c \ge 2\log_2 d_s$.
\end{thm}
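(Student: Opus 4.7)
The plan is to generalise Gottesman's proof of the purely-quantum share-size bound from \cite{Gott00}, exploiting the fact, alluded to in the excerpt, that his argument really controls a classical communication cost and therefore extends naturally to hybrid shares. First I would fix an important share $i$, with quantum component of dimension $d_q$ and classical component of dimension $d_c$, together with a forbidden set $B$ of the remaining shares such that $A := B\cup\{i\}$ is authorised; such a $B$ exists by the definition of ``important'' combined with monotonicity. I would then purify the dealer's encoding by introducing a reference system $R$ of dimension $d_s$, maximally entangled with the input secret. Because $B$ is forbidden, its reduced state factorises as $\tau_R\otimes\rho_{S_B}$ with $\tau_R$ maximally mixed, so $B$'s shares carry no correlation with $R$; because $A$ is authorised, a local operation on $A$'s shares reconstructs the secret, establishing maximal entanglement with $R$.

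Next, I would reinterpret the scenario as a communication task: the dealer distributes the (secret-independent) system $S_B$ to $B$ beforehand, retains $R$, and afterwards transmits share $i$ to $B$, consisting of a $d_q$-dimensional quantum register $Q_i$ and a classical symbol $C_i$ drawn from a $d_c$-ary alphabet. After receipt, $B$ can produce the secret and hence effectively teleport the half of the maximally entangled state on $R$ to themselves. Under an entanglement-assisted protocol, the quantum component carries at most $2\log_2 d_q$ classical bits (the dense-coding bound) while the classical component carries at most $\log_2 d_c$ bits, so the total classical capacity of share $i$ is at most $2\log_2 d_q + \log_2 d_c$. On the other hand, the optimality result of \cite{AMTW00}, cited earlier in the excerpt, asserts that any protocol encrypting an arbitrary $d_s$-dimensional quantum state must consume at least $2\log_2 d_s$ classical bits; combining the two inequalities yields $2\log_2 d_q + \log_2 d_c \ge 2\log_2 d_s$, as claimed.

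The main obstacle is the communication step: one must justify rigorously that the combined classical capacity of share $i$ is bounded above by $2\log_2 d_q + \log_2 d_c$, even when the dealer is allowed to use the classical channel adaptively to choose the quantum encoding, or to supply additional free entanglement between dealer and $B$. This rests on the additivity of the entanglement-assisted classical capacity across independent classical and quantum channels, together with the converse part of the entanglement-assisted coding theorem. A possibly cleaner alternative, closer in spirit to Gottesman's original argument, is to bound the Holevo quantity directly: show that $\chi(R;Q_iC_i) \le 2\log_2 d_q + \log_2 d_c$ while any authorised reconstruction forces $\chi(R;A) \ge 2\log_2 d_s$, and use the fact that $S_B$ contributes nothing to the Holevo information with $R$ because it is uncorrelated with $R$. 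Either route should close the argument with only minor technical bookkeeping.
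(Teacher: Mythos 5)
Your overall architecture matches the paper's: isolate an important share, find a forbidden set $B$ that this share promotes to an authorised set $A$, view the handover of that single share as a communication channel from the holder of the rest of the system to $B$, and upper-bound the entanglement-assisted classical capacity of that channel by $2\log_2 d_q + \log_2 d_c$ via the converse to dense coding and additivity of the entanglement-assisted capacity. The gap is in the other half of the inequality. You invoke the key-length bound of \cite{AMTW00}, which states that \emph{classically} encrypting an arbitrary $d_s$-dimensional state requires a key of $2\log_2 d_s$ bits. That result does not do the job here, for two reasons. First, the ``key'' in your scenario is the hybrid share $(Q_i, C_i)$, which has a quantum component; a lower bound on the size of a partly-quantum key is essentially what the theorem is trying to prove, so appealing to the purely classical key bound is either inapplicable or circular. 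Second, a key-size bound and a classical communication capacity are different quantities, so the two inequalities you propose to ``combine'' do not meet: your upper bound controls how many classical bits the share can carry, so what you need is a matching \emph{lower} bound on how many classical bits the share transfer can be made to carry. The paper supplies exactly this via superdense coding \cite{dense}: because the share transfer allows an arbitrary $d_s$-dimensional state to be conveyed (with pre-shared entanglement permitted), the sender can first dense-code $2\log_2 d_s$ classical bits into a $d_s$-dimensional half of a cat state, so the transfer achieves classical rate $2\log_2 d_s$; comparing with the capacity upper bound yields the theorem. You need this achievability step, or an equivalent one, in place of the appeal to \cite{AMTW00}.

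Your fallback entropic sketch, however, is sound and would close the argument by a genuinely different route. Writing $I(R;A) = I(R;S_B) + I(R;C_i \mid S_B) + I(R;Q_i \mid S_B C_i)$, the first term vanishes because $B$ is forbidden, the second is at most $\log_2 d_c$ since $C_i$ is classical, and the third is at most $2\log_2 d_q$ by the dimension bound on conditional mutual information; authorised reconstruction forces $I(R;A) \ge 2\log_2 d_s$. This avoids the capacity and superdense-coding machinery entirely, and also sidesteps the paper's need to reduce mixed-state schemes to pure-state ones (used there so that the complement $\overline{F}$ is authorised and can inject an arbitrary secret), since it only requires that $B$ be forbidden and $B\cup\{i\}$ authorised. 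If you promote that sketch to the main argument and supply the two dimension bounds explicitly, the proof goes through; as written, with the \cite{AMTW00} step as the load-bearing ingredient, it does not.
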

\begin{IEEEproof}
As shown in \cite{Gott00}, if an arbitrary $d$-dimensional quantum secret $\psi$ is encoded into a state of $n$ shares $\phi_{p_1p_2\ldots p_n}$,
such that a subset of shares $P$ can fully recover $\psi$ in the absence of any assistance from the
complementary subset $\overline{P}$, then the subset $P$ can also, without assistance, {\it alter} the
secret $\psi$ to a new arbitrary $d$-dimensional state $\psi'$, such that a recovery procedure that would originally
have recovered $\psi$ will now recover $\psi'$.

Consider now a perfect pure-state QSS (i.e.\ the combined state of all players for a given secret is pure),
for a $d_s$-dimensional secret $\psi_{d_s}$, and a forbidden subset of players  $F$ such
that $F$ can be made authorised by adding one additional player.
For a perfect scheme, it is clear that such a subset must always exist (and can be found for any given player
with an important share).  Additionally, in a pure-state QSS, it is known \cite{CGL}
that the complement of any unauthorised subset is authorised, so the complementary subset $\overline{F}$ of $F$
will be authorised.

The subset $\overline{F}$ can therefore replace $\psi_{d_s}$ with an arbitrary $d_s$-dimensional secret $\psi_{d_s'}$.  They can then
choose an appropriate player $p_a$ and pass that player's share (which we will take to be a quantum share of dimension $d_q$ and a
classical share of dimension $d_c$, abbreviated $(d_q,d_c)$) to the subset $F$, such that the resultant
subset $Fp_a$ will be authorised and able to recover the new secret $\psi_{d_s'}$.  Consequently, this construction provides
a mechanism to transfer an arbitrary quantum state $\psi_{d_s'}$ of dimension $d_s$ between parties by transferring a share of dimension $(d_q,d_c)$
(note though that this mechanism may require the parties to additionally share some pre-existing entanglement).
Since any perfect mixed-state QSS
can be represented as a perfect pure-state QSS with some shares discarded \cite{CGL}, mixed-state QSS schemes also imply the existence
of such a mechanism.

Consider now the case of two parties, Alice and Bob, who share a cat state $\sum_{i=0}^{d_s}\ket{i}_A\ket{i}_B$.
By the simple generalisation of superdense coding \cite{dense}, Alice may locally encode one of ${d_s}^2$ classical states
into this state, and thus communicate $2\log_2 {d_s}$ classical bits to Bob by transferring her $d_s$-dimensional half of
the state to him.  Hence if Alice and Bob possess appropriate states to make use of the QSS mechanism above,
Alice can transfer to Bob a state of dimension $(d_q,d_c)$ and thereby transfer a state of dimension $d_s$ and hence communicate $2\log_2 d_s$ classical bits.

However, it has been proven \cite{densebound} that to communicate $2\log_2 d$ classic bits using quantum states, even in the presence of pre-existing entanglement, requires the communication
of a quantum state of dimension $\ge d$.  Hence transferring a quantum state of dimension $d_q$ can communicate at most $2\log_2 d_q$ bits.
Since the entanglement-assisted classical capacity of a channel is additive (and the entanglement-assisted capacity of a perfect classical channel
is simply its classical capacity) \cite{BSST02}, transferring a state $d_q$ and $\log_2 d_c$ classical bits can transfer at most
$2\log_2 d_q+\log_2 d_c$ bits.  Hence communicating $2\log_2 d_s$ bits in total via this method (as we have shown can be achieved given the existence of a perfect QSS with this share size)
requires that
\begin{equation}
\log_2 d_c + 2\log_2 d_q \ge 2\log_2 d_s\label{eq-sharebound}
\end{equation}
and hence this bound must be satisfied by all important shares of any perfect QSS.
\end{IEEEproof}
We require a CK of $2\log_2 d_s$ bits to classically
encrypt a secret of size $d_s$, and a perfect CSS requires classical share sizes $\log_2 d_c$ to be at least as
large as the classical secret.  Hence (since we clearly require $d_q>0$ for a quantum secret), if using a hybrid construction in which the quantum secret is securely classically 
encrypted, we cannot saturate the bound (\ref{eq-sharebound}) using a perfect CSS.
We show in the next section that, however, we can saturate this bound, and produce a perfect
overall QSS, using a hybrid scheme in which both underlying QSS and CSS schemes are ramp schemes.

\section{Doubly-ramp optimal HRQSS}
Our construction for such a hybrid scheme will be much as before: encrypt the quantum
secret using a CK, then distribute the EQS and CK using a QSS and CSS respectively.
In general, however, if both underlying schemes are ramp schemes, certain
subsets of players will be intermediate subsets with respect to both the EQS and CK
i.e.\ they will have some partial information about both, and hence in general be able to construct
some partial information about the secret, making the scheme imperfect.

We find, though, that one can construct schemes which distribute the information
in such a way that such subsets, despite having
partial information on both EQS and CK, have no information about the original quantum secret.
Hence we can use both quantum and classical ramp schemes in a hybrid scheme,
reducing the amount of classical communication required by the dealer
and the size of the classical element of individual shares.

We will describe such schemes as optimal if the size of the largest quantum share element
is the smallest possible and all shares saturate the bound (\ref{eq-sharebound}) i.e. given the size of the quantum
elements, the classical elements are also the smallest possible.  (In our examples below, all of the quantum elements
are of equal size, likewise the classical elements).

\subsection{$(n,n)$ protocols}
A relatively simple example occurs for the $(n,n)$ access structure,
for which we find the following result:
\begin{thm}
One can construct an optimal perfect $(n,n)$ hybrid threshold QSS encoding
an $n$-qudit secret (where the qudits are of equal but arbitrary dimension).
\end{thm}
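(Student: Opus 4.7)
My plan is to apply the doubly-ramp scheme of this section with the trivial $(n,n,n)$ ramp QSS on the encrypted secret---one encrypted qudit per player---and to distribute the $2n\log_2 d$-bit classical key via a non-threshold ramp CSS engineered so that any $n-1$-player coalition learns nothing about the keys for the qudits in its possession.

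Explicitly, the dealer draws $n(n-1)$ independent uniformly random values $r_{ij}\in\ZZ_d\times\ZZ_d$ indexed by the ordered pairs $i\ne j$, sets the qudit-$j$ encryption key to $c_j=\sum_{i\ne j}r_{ij}=(x_j,y_j)$, and encrypts qudit $j$ of the secret with the generalized Pauli $X^{x_j}Z^{y_j}$. Player $j$ then receives the $j$-th encrypted qudit together with the classical ``row'' $\{r_{jk}:k\ne j\}$, yielding a quantum share of dimension $d$ and a classical share of dimension $d^{2(n-1)}$; every share saturates the bound $2\log_2 d_q+\log_2 d_c=2\log_2 d_s$ of Section~\ref{sec-bound} while using the smallest possible single-qudit quantum element. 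Equivalently, each key $c_j$ sits in a perfect $(n-1,n-1)$ CSS among the players other than $j$.

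Correctness is immediate: the $n$ players jointly hold every $r_{ij}$, reconstruct every $c_j$, and invert each Pauli. For security, fix any coalition of $n-1$ players, missing some player $p$. Its classical data $\{r_{ij}:i\ne p,\,j\ne i\}$ excludes exactly the row $\{r_{pj}\}_{j\ne p}$ of $n-1$ independent uniform unknowns, so for every $j\ne p$ the key $c_j$ equals a coalition-computable quantity plus the independent uniform value $r_{pj}$, making the $n-1$ relevant keys marginally uniform and mutually independent conditional on the coalition's classical data. Averaging the coalition's reduced $(n-1)$-qudit encrypted state over independent uniform single-qudit Paulis then collapses it to $\one/d^{n-1}$, independent of $\psi$; combined with the classical share (itself uncorrelated with $\psi$), the coalition's joint state carries no information about the secret.

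The principal obstacle is formalizing this security step, which rests on the ``Latin-square''-like layout of the $r_{ij}$: removing any single player's row deletes exactly one independent uniform summand from every key associated with a qudit still in the coalition, which is what drives the Pauli twirl to full randomization. Once that is in place, bound saturation together with the maximally-mixed reduced state yields both optimality and perfection of the $(n,n)$ access structure.
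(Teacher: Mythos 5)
Your construction is exactly the paper's: encrypt each qudit separately with its own one-time Pauli key and share that key additively among the other $n-1$ players via a perfect $(n-1,n-1)$ CSS, then check bound saturation. The only difference is that you spell out the security step (independence of the missing row and the resulting Pauli twirl to the maximally mixed state) in more detail than the paper, which states it in one sentence; this is a welcome elaboration, not a different route.
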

\begin{proof}
Separately encrypt each qudit $q_i$ using a 2-dit CK $k_i$ via
the standard classical encryption protocol (\ref{eq-encrypt}).
Distribute each encrypted qudit to a different player $i$,
and distribute the corresponding $K_i$ to the remaining $n-1$ players
using an $(n-1,n-1)$ threshold CSS\footnote{An $(n,n)$ CSS with a $d$-dimensional secret can
be constructed using $d$-dimensional shares for any $n$ and $d$, by randomly
choosing the values of any $n-1$ shares, and choosing the value of the
final share
such that summation of all the shares modulo $d$ is equal to the secret.}.
Thus $n$ players will have all qudits and the corresponding CK for each qudit, hence
be able to recover the secret.
Any fewer than $n$ players will, for any given $q_i$, lack either the encrypted
qudit and/or sufficient shares to have any information about the corresponding
$K_i$, and hence have no information about the secret.

The total quantum communication is simply the size of the secret and hence
can be no smaller, and since every quantum share is of equal size, this also
minimises the size of the largest quantum share.  The share sizes (each player receives 2 classical dits for $n-1$ of the $K_i$
for a total of $2(n-1)$ dits and 1 qudit per share, for an $n$-qudit secret)
saturate the bound \ref{eq-sharebound}, hence the
scheme is optimal.
\end{proof}
We note that in the above construction (for which the $(2,2)$ case with a 1-qubit secret is simply the teleportation
scheme of \cite{HBB99}), even though each {\it individual}
CSS for a given CK $K_i$ is perfect (as is the overall HRQSS), the overall CSS scheme for the 
complete CK (consisting of all $K_i$) is a $(k,L,n)=(n,2,n)$ ramp scheme.
Any set of $n-2$ players has no information about any $K_i$,
while $n-1$ players will have complete information about a single $K_i$
and none about any of the others i.e. they will have partial information about the complete CK.  Interestingly,
the overall scheme is optimal despite this ramp CSS (RCSS) scheme not being optimal,
in the sense that the classical share size of $2(n-1)$ dits is greater than
the secret size of $2n$ dits divided by $L=2$.

\subsection{$(n-1,n)$ protocols}
The $(n,n)$ access structure allows for a relatively simple construction which
does not readily extend to other access structures.  As discussed below, however,
we have found a more complex approach which can be used to generate various
cases of $(n,n-1)$ protocols.  The underlying reasoning does share a common idea
with the $(n,n)$ protocol: since, as discussed above, we can perform a 2-bit classical
 encryption of a qubit (which can also be thought of as applying one of the four Pauli operations
 $\{I,X,Y,Z\}$ to that qubit at random), then we can construct a classical key for an encrypted
multi-qubit state from the keys for the individual encrypted qubits.  If the qubits and information
about the key are distributed in an HRQSS such a way that a subset of players only receives classical
 information pertaining to the qubits they do not possess, then overall they have no information
on the quantum secret.

We first define the code space we will be be using for our underlying QSS scheme.
Let $\ket{s;t}$ with $\{s,t\}\in\{0,1\}$ be the 2-qubit eigenstates satisfying
\begin{align}
X\otimes X\ket{s;t}&=(-1)^s\ket{s;t}\textrm{, and}\\
Z\otimes Z\ket{s;t}&=(-1)^t\ket{s;t}.
\end{align}
(These are also the four Bell states: $\ket{0,0}=\ket{\Phi^+}$, $\ket{1,0}=\ket{\Phi^-}$, $\ket{0,1}=\ket{\Psi^+}$ and $\ket{1,1}=\ket{\Psi^-}$).  Define a state of $n=2m$ qubits
\begin{equation}
|{\bf s};{\bf t}\rangle\equiv |s_1;t_1\rangle \otimes |s_2;t_2\rangle\otimes\cdots\otimes |s_m;t_m\rangle
\end{equation}
where  ${\bf s}=(s_1,s_2,...,s_m)$ and ${\bf t}=(t_1,t_2,...,t_m)$ are two sequences of $m$ bits each.

Consider now the stabiliser group $G_s$ acting on $n=2m$ qubits, with two generators:
\begin{equation}
G_s=\langle X^{\otimes n},Z^{\otimes n}\rangle.
\end{equation}
Our code space $\mathcal{C}$, in which we encode our quantum secret, will be the $n$-qubit subspace stabilised by $G_s$.  It can easily be verified that $g|{\bf s};{\bf t}\rangle=|{\bf s};{\bf t}\rangle$ for all four elements $g\in G_s$ if and only if $\sum_{k=1}^{m}s_k=0$ and $\sum_{k=1}^{m} t_k=0$ (with all summations done mod 2 here and henceforth), thus we can write $\mathcal{C}$ as
\begin{equation}
\mathcal{C}={\rm span}\left\{|{\bf s};{\bf t}\rangle\;\Big|\;\sum_{k=1}^{m}s_k=0\;\;,\;\;\sum_{k=1}^{m} t_k=0\right\}\;.
\end{equation}

We see that this code can correct 1 erasure error: should a qubit go missing, the remaining players can replace it with
a new qubit in some arbitrary state, then collectively measure the values of the stabilisers
$X^{\otimes n}$ and $Z^{\otimes n}$ on the new set of $n$ qubits, projecting the state back into the code space
up to an error on the replacement qubit.  Any error (indicated by the measured stabiliser values) can be corrected
by applying a $Z$ and/or $X$ operation to the new qubit, thus recovering the original state.

We wish to classically encrypt our encoded secret within this code space, and so introduce further notation
to describe our classical encryption.  Denote the four Pauli operations by $B_{pq}$ ($p,q\in \{0,1\}$) as follows:
\begin{equation}
B_{00}\equiv I,\quad B_{01}\equiv X,\quad B_{10}\equiv Z \quad B_{11}\equiv ZX=-iY.
\end{equation}
Then the 2-qubit states $\ket{s,t}$ satisfy
\begin{align}
B_{pq}\otimes I |s;t\rangle&= (-1)^{sq}|s+p;t+q\rangle\label{eq-BI}\\
I\otimes B_{pq} |s;t\rangle&= (-1)^{p(q+t)}|s+p;t+q\rangle\label{eq-IB}.
\end{align}
Given two $n$-bit strings ${\bf p}=(p_1,p_2,...,p_n)$ and ${\bf q}=(q_1,q_2,...,q_n)$, we denote
\begin{equation}
B_{\bf pq}=B_{p_1q_1}\otimes B_{p_2q_2}\otimes\cdots\otimes B_{p_nq_n}.\label{eq-Bpq}
\end{equation} 
It follows from (\ref{eq-Bpq}), (\ref{eq-BI}) and (\ref{eq-IB}) that $B_{\bf pq}|\psi\rangle\in\mathcal{C}$ for all $|\psi\rangle\in\mathcal{C}$ if and only if $\sum_{k=1}^{n}p_k=\sum_{k=1}^{n}q_k=0$.

Consider now a quantum secret $|\psi\rangle\in\mathcal{C}$.  The dealer picks, with uniform probability,
two $n$-bit strings ${\bf p}=(p_1,p_2,...,p_n)$ and ${\bf q}=(q_1,q_2,...,q_n)$ (to remain within the code space, we require that
$\sum_{k=1}^{n}p_k=\sum_{k=1}^{n}q_k=0$, thus there are only $2^{2(n-1)}$ possibilities for the two strings)
and applies the operator $B_{\bf{pq}}$ to produce the EQS
\begin{equation}
\ket{\psi_{\bf{pq}}}=B_{\bf{pq}}\ket{\psi}.
\end{equation}
Note also, however, that the stabiliser generators $X^{\otimes n}$ and $Z^{\otimes n}$ (which leave the secret unchanged) can be expressed as operators $B_{\bf{p'q'}}$, and that, up to an irrelevant overall phase, products of $B$ operators are $B$ operators themselves i.e.
\begin{equation}
B_{\bf{pq}}B_{\bf{p'q'}}=\pm B_{(\bf{p}+\bf{p'})(\bf{q}+\bf{q'})}.
\end{equation}
Thus, up to an overall phase, there are four equivalent operators $B_{\bf{pq}}$ for a given transformation of the secret
(the original $B_{\bf{pq}}$ combined with one, both, or neither of the two stabiliser generators), for a total of $2^{2(n-1)-2}=2^{2(n-2)}$ distinct transformations of the secret.  Hence our CK consists of $2(n-2)$ bits of information, as is required for an $(n-2)$-qubit secret.  Similarly, defining $\mathbf{\bar{p}}=(p_1+1,p_2+1,\ldots, p_n+1)$ and similarly for $\bf \bar{q}$ we have that
\begin{equation}
|\psi_{\bf pq}\rangle=\pm|\psi_{\bf p\bar{q}}\rangle=\pm|\psi_{\bf \bar{p}q}\rangle=\pm|\psi_{\bf \bar{p}\bar{q}}\rangle\;.
\label{eq-bars}
\end{equation}

The dealer distributes the EQS by sending each player a qubit; since the EQS is within the code space,
any $n-1$ players will be able to recover it.  The overall quantum secret is of size $n-2$ qubits and each
player receives 1 qubit, thus the scheme is optimal with respect to the quantum share size (since player shares are the smallest
possible quantum system: 1 qubit).  From the share size bound for perfect QSS schemes it is implicitly clear that the
distribution of the EQS constitutes an RQSS (since the quantum shares are smaller than the EQS).

To have an optimal overall HRQSS scheme for this quantum secret we see from (\ref{eq-sharebound}) that the classical share size must be
$2((n-2)-1)=2(n-3)$ bits.  We therefore require an RCSS scheme distributing the the $2(n-2)$ classical bits of information
in the CK (strings $\bf p$ and $\bf q$, up to the equivalency relations discussed above which mean that the strings are each $n$
bits long
but carry $n-2$ bits of information about the EQS) to the players using shares of this size.
A convenient feature of our construction is that a
 scheme which works for one of $\bf p$ and $\bf q$ can also be used to distribute the other, so we will consider only a
scheme for distributing the information in $\bf p$, with each player receiving $n-3$ bits.

We now describe, in terms of the distribution of $\bf p$, the classical schemes we have found for specific player numbers $n$.
These are all of the form that subsets of $n-1$ players receive complete information (as required), those of $n-2$ players receive
one (useless) bit of
 information on the CK, and fewer than $n-2$ players receive no information.  For a subset of players to possess useful classical
information on the quantum secret, they must have some information corresponding to the qubits they also possess (e.g. a
subset of players 1,2 and 3 must have some information on the bits $p_1$, $p_2$ and/or $p_3$.  Note that since $\sum_k p_k=0$,
knowing the complement of such bits can provide such information (e.g. if there were 4 bits in total, knowing $p_4+p_1$ would also give $p_2+p_3$).

\subsubsection{4 players}
In this case the dealer generates a single additional random bit $z$, and distributes to each player $k$ the bit $p_{\sigma(k)}+z$ using the permutation
\begin{equation}
\sigma=\left(
\begin{array}{cccc}
1& 2& 3& 4\\
1 &4 &2 &3
\end{array} \right)\;.
\end{equation}
so e.g. player 1 receives $p_1+z$, player 2 receives $p_4+z$ etc.  Since they do not know $z$, no player alone has any information on the
secret $\bf p$.  Since $\sum_k p_k=0$ and $4z=0$ mod 2, 3 players know that the missing player's bit is equal to the sum of their bits,
and can thus determine all 4 bits (not knowing $z$, they could have either $\bf p$ or $\bf \bar{p}$ but these are equivalent as per
(\ref{eq-bars}).

Two players can find out $p_{\sigma(i)}+p_{\sigma(j)}=p_{5-\sigma(i)}+p_{5-\sigma(j)}$, but the construction of $\sigma$ is such that no pair $(\sigma(i),\sigma(j))$ is $(i,j)$ or $(5-i,5-j)$ for any $i\neq j$.  E.g. players 1 and 4 can obtain
$p_1+p_4=p_2+p_3$ but since they only possess qubits 1 and 2 this gives no overall information on the quantum secret.
Thus by applying this scheme to the distribution of $\bf p$ and $\bf q$ (with independently generated values of $z$ for each) we
produce an optimal $(3,4)$ HRQSS.

\subsubsection{6 players}
For 6 players the classical secret size (for $\bf p$ alone) is $n-2=4$ bits, and each player receives $n-3=3$ bits of information.
The dealer generates additional random variables $x_1,x_2,\ldots, x_6$ and $y_1,y_2,\ldots, y_6$,
requiring that
\begin{equation}
\sum_{i=1}^{6}x_i=0\label{eq-xsum}
\end{equation}
(but with no such constraint on the $y_i$).  Each player $k$ receives the following 3 bits: $x_k$, $y_k$ and
\begin{multline}
 s_k=x_{k+1}+x_{k+2}+y_{k+2}+y_{k+3}+p_{k+3}\\
+ (p_{k-1}\textrm{ (if }k\textrm{ even) or }p_{k-2}\textrm{ (if }k\textrm{ odd)}).
\end{multline}
The complete set of received bits is given in Table \ref{tab-bits}.
\begin{table}
\begin{center}
\begin{tabular}{|c|c|c|c|}
\hline
Player $1$&$x_1$&$y_1$&$x_2+x_3+y_3+y_4+p_4+p_5$\\
\hline
Player $2$&$x_2$&$y_2$&$x_3+x_4+y_4+y_5+p_5+p_1$\\
\hline
Player $3$&$x_3$&$y_3$&$x_4+x_5+y_5+y_6+p_6+p_1$\\
\hline
Player $4$&$x_4$&$y_4$&$x_5+x_6+y_6+y_1+p_1+p_3$\\
\hline
Player $5$&$x_5$&$y_5$&$x_6+x_1+y_1+y_2+p_2+p_3$\\
\hline
Player $6$&$x_6$&$y_6$&$x_1+x_2+y_2+y_3+p_3+p_5$\\
\hline
\end{tabular}
\end{center}
\caption{Listing of bits received by players in the 6-player RCSS scheme}\label{tab-bits}
\end{table}
  We now consider key examples, which suffice to show the scheme works in all cases.
\paragraph{Example 1: 4 adjacent players}
If players 1,2,3 and 4 collaborate, they know $x_1$ to $x_4$ and $y_1$ to $y_4$ and can subtract these (and the complement
$x_5+x_6=x_1+x_2+x_3+x_4$, using (\ref{eq-xsum}) from their $s_k$ bits, leaving knowledge of:
\begin{align*}
1:&p_4+p_5\\
2:&y_5+p_5+p_1\\
3:&x_5+y_5+y_6+p_6+p_1\\
4:&y_6+p_1+p_3.
\end{align*}
Variable $y_6$ appears only in bit 4, so cannot be eliminated, making bit 4 useless to the players.  The same occurs with $x_5$ (since $x_6$ does not appear) and bit 3.
With bits 3 and 4 excluded, $y_5$ appears only in bit 2, rendering it useless, and the only information the players have is the value of
$p_4+p_5=p_1+p_2+p_3+p_6$, which doesn't help them
since they don't have qubit 5 or qubit 6.

If instead we have players 2 to 5 (we require a second example due to the odd/even dependence in the distributed bits),
they know the corresponding $x$ and $y$ values and hence from their $s_k$ bits they know:
\begin{align*}
2:&p_5+p_1\\
3:&y_6+p_6+p_1\\
4:&x_6+y_6+y_1+p_1+p_3\\
5:&y_1+p_2+p_3.
\end{align*}
Bit 4 is useless due to $x_6$, making bits 3 and 5 also useless due to $y_6 $ and $y_1$ respectively
and leaving the players with only $p_5+p_1=p_2+p_3+p_4+p_6$, but lacking qubits 1 or 6 means the players have no knowledge of the secret.
By cyclic permutation these examples cover all cases of 4 adjacent players.

\paragraph{Example 2: 4 players, 3 adjacent}
If players 1,2,3 and 5 collaborate, by analogous reasoning to the above they know:
\begin{align*}
1:&y_4+p_4+p_5\\
2:&x_4+y_4+p_5+p_1\\
3:&x_4+y_6+p_6+p_1\\
5:&x_6+p_2+p_3
\end{align*}
Bit 3 is useless due to $y_6$.  The players can sum bits 2 and 5 and eliminate $x_4+x_6=x_1+x_2+x_3+x_5$ to get $y_4+p_1+p_2+p_3+p_5$, then sum with bit 1 to get $p_1+p_2+p_3+p_4=p_5+p_6$ which is
their only bit of information, and useless without qubits 4 or 6.

If players 2,3,4 and 6 collaborate they know from their additional bits:
\begin{align*}
2:&y_5+p_5+p_1\\
3:&x_5+y_5+p_6+p_1\\
4:&x_5+y_1+p_1+p_3\\
6:&x_1+p_3+p_5\\
\end{align*}
Bit 4 is useless due to $y_1$. Eliminating $y_5$ between bits 1 and 2 gives $x_5+p_5+p_6$, then eliminating $x_5+x_1$ with
bit 6 gives $p_1+p_3=p_2+p_2+p_5+p_6$, useless without qubits 3 or 5.
By cyclic permutation these examples cover all cases of 4 players with 3 adjacent.
\paragraph{Example 3: 4 players, 2 adjacent}
If players 1,2,4 and 5 collaborate they have
\begin{align*}
1:&x_3+y_3+p_4+p_5\\
2:&x_3+p_5+p_1\\
4:&x_6+y_6+p_1+p_3\\
5:&x_6+p_2+p_3\\
\end{align*}
Bits 1 and 4 are useless due to $y_3$ and $y_6$.  Eliminating $x_3+x_6$ between bits 2 and 4 gives $p_1+p_2+p_3+p_5=p_4+p_6$, useless without qubits 3 or 6.

If players 2,3, 5 and 6 collaborate they have
\begin{align*}
2:&x_4+y_4+p_5+p_1\\
3:&x_4+p_6+p_1\\
4:&y_1+p_1+p_3\\
6:&x_1+p_3+p_5\\
\end{align*}
Bits 2 and 4 are useless due to $y_4$ and $y_1$. Eliminating $x_4+x_1$ between bits 3 and 6 gives $p_1+p_3+p_5+p_6=p_2+p_4$, useless without qubits 1 or 4.  Cyclic permutation of the above examples covers all remaining cases, thus we have shown that no set of 4 players has any information on the quantum secret.

We still need to show that 5 players can recover all 4 bits of the classical secret, for which it is sufficient
to show that they obtain 4 independent bits of information about the string $\bf p$.  Suppose players 1 to 5
 collaborate.  Using (\ref{eq-xsum}) they know all values of $x_k$ and all $y_k$ except $y_6$.  They therefore have access to the
 bits $p_4+p_5$, $p_5+p_1$, $y_6+p_6+p_1$, $y_6+p_1+p_3$, $p_2+p_3$ and $p_3+p_5$.  Eliminating $y_6$ between
 the two bits containing it gives them access to the 4 independent bits $p_4+p_5$, $p_5+p_1$, $p_3+p_6$ and $p_2+p_3$.
If players 2 to 6 collaborate, they have access to $p_4+p_5$, $p_5+p_1$, $p_6+p_1$, $y_1+p_1+p_3$, $y_1+p_2+p_3$ and $p_3+p_5$,
and can obtain 4 independent bits $p_4+p_5$, $p_5+p_1$, $p_6+p_1$ and $p_1+p_2$.  Other cases are covered through cyclic permutation.  Hence all subsets of 5 players can recover the secret.

We have therefore demonstrated optimal $(3,4)$ and $(5,6)$ HRQSS schemes.  We conjecture that one can construct such $(n-1,n)$
schemes for $2n$ qubits for any value of $n$, however we note that the relative advantage in doing so decreases as $n$ gets large:
our ramp classical share size of $2(n-3)$ classical bits, while optimal for the HRQSS, is only 2 bits less than the lower bound for a perfect CSS scheme.  Thus the $(3,4)$ case is arguably the most important example, since the classical shares are half of what
 an optimal perfect CSS scheme would require in this case.

More significantly, this method of construction for HRQSS schemes holds great promise for adapting other QECCs from the wide
class of stabiliser codes.  By exploiting the structure of the underlying code and formulating the CSS element
in terms of Pauli transformations on individual qubits, we reduced the problem to a purely classical one of constructing a suitable
matching RCSS scheme for the underlying QSS scheme (we note that we are not aware of our particular RCSS schemes having appeared previously in the literature), for which
we can use two separate, identical schemes for the strings $\bf p$ and $\bf q$, further simplifying the problem.
The properties of the HRQSS could be established purely classically, without having to generate
density matrices for the combined classical and quantum information for the various player subsets.  Adopting this approach with
other stabiliser codes should make finding additional HRQSS schemes much more straightforward.

\section{Conclusion}
We have shown that one can produce perfect threshold QSS schemes with both quantum share sizes
and total dealer quantum communication below what has previously been shown, for any allowable non-boundary threshold $(k,n)$
access structure (i.e.\ any scheme where $n<2k-1$).  We have further shown $(n,n)$ and (for specific values of $n$) $(n-1,n)$ access structures
for which one can optimise both the quantum and classical share size i.e.\ our protocol gives the smallest
possible quantum shares and the smallest possible classical shares given the quantum share size.  These protocols
allow one to find an efficient scheme for situations where some variable number of players can process quantum
information, and one's priorities may vary from minimising the total quantum communication to minimising
the total number of quantum shares.

These results suggest various directions for further work.  Firstly, our optimal doubly-ramp
schemes are currently only known for specific access structures; it would be very desirable to generalise
this, and to know when and how an appropriate RCSS scheme can be constructed for a given RQSS in order to produce
a perfect overall scheme.  As discussed above, the approach used appears very promising for adapting to other stabiliser codes.
We note that the RQSS schemes used do not fall under
the general construction of \cite{OSIY05}; finding matching RCSS schemes for this construction
would produce a wide range of optimal hybrid schemes.

We further note that our doubly-ramp schemes can be seen to be optimal in the sense of having the smallest possible quantum
shares only due to special circumstances: for the $(n,n)$ schemes our total quantum communication is the size of the secret
and for the $(n-1,n)$ schemes the quantum shares have the smallest non-trivial dimension of 2.  In general, however,
we do not know what the smallest possible quantum shares are for a given perfect hybrid scheme, or indeed how large
the overall quantum system communicated by the dealer must be, beyond the obvious lower bound
of the secret size $d_s$; finding a tight bound would be an important step to identifying optimal schemes.

We have only addressed the application of hybrid ramp schemes to the construction of threshold
schemes, rather than more general access structures.  Some results on non-ramp hybridisation of such
schemes were found in \cite{NMI01} and \cite{SS05} and it seems likely that hybrid ramp schemes
could reduce share sizes for these access structures as well.  In general, though, little is currently
known about non-threshold QRSS schemes.

Fianlly we note, and thank an anonymous referee for pointing out,
that for HRQSS schemes to be useful in a practical insecure setting, without access to secure quantum channels
between individual players and players and dealer, will require additional features to provide
security against eavesdropping in secret distribution and/or recovery.  This is another
important direction for future work.

\section*{Acknowledgment}
We thank Ran Hee Choi, Aram Harrow and Barry Sanders for helpful discussions.

\bibliographystyle{IEEEtran}
\bibliography{hybrid-ieee}

\end{document}